\newtheorem{definition}{Definition}
\newtheorem{lemma}{Lemma}
\newtheorem{theorem}{Theorem}
\newtheorem{assumption}{Assumption}
\newtheorem{problem}{Problem}
\title{\LARGE \bf A Scalable Design Approach to Resilient Architectures for Interconnected Cyber-Physical Systems: Safety Guarantees under Multiple Attacks}
\author{Eman Badr$^{1}$ and Abdullah Al Maruf$^{2}$
     \thanks{$^1$California State University, Los Angeles (Cal State LA).
        {\tt\small ebadr@calstatela.edu}}
     \thanks{$^2$California State University, Los Angeles (Cal State LA).
        {\tt\small amaruf@calstatela.edu}}	
        \thanks{The authors thank ChatGPT for improving the clarity and conciseness of a few sentences in this paper.}
}
\begin{document}

\maketitle

\begin{abstract}

Complex, interconnected cyber-physical systems (CPS) are increasingly prevalent in domains such as power systems. Cyber-resilient architectures have been proposed to recover compromised cyber components of CPS. Recent works have studied tuning the recovery times of such architectures to guarantee safety in single-system settings. Extending these designs to interconnected CPS is more challenging, since solutions must account for attacks on multiple subsystems that can occur in any order and potentially infinite possible temporal overlap. This paper aims to address the aforementioned challenge by developing a scalable framework to assign resilient architectures and to inform the tuning of their recovery times. Our approach introduces a scalar index that quantifies the impact of each subsystem on safety under compromised input. These indices aggregate linearly across subsystems, enabling scalable analysis under arbitrary attack orderings and temporal overlaps. We establish a linear inequality relating each subsystem’s index and recovery time that guarantees safety and guides resilient architecture assignment. We also propose a segmentation-based approach to strengthen the previously derived conditions. We then present algorithms to compute the proposed indices and to find a cost-optimal architecture assignment with a safety guarantee. We validate the framework through a case study on temperature regulation in interconnected rooms under different attack scenarios.


\end{abstract}
\section{Introduction}\label{sec:intro}

Cyber-physical systems (CPSs) are exposed to malicious cyberattacks, as reported across numerous applications, including transportation \cite{greenberg2019hackers} and power systems~\cite{case2016analysis}. Such attacks may cause safety violations, with consequences ranging from equipment damage to significant risks to human safety. To this end, safety verification, safety-critical control design, and cyber-resilient architectures have been extensively studied~\cite{abdirash2025decentralized,ames2016cbf,alhidaifi2024survey,mo2009secure,cohen2020approximate,mertoguno2019physics,romagnoli2019design}.

Research on cyber-resilient architectures focuses on mechanisms such as redundancy, software diversity, and reactive restarts to enable recovery after compromise~\cite{mertoguno2019physics,arroyo2019yolo, abdi2018guaranteed,romagnoli2019design}. Recent works have examined tuning recovery time and the control policy to guarantee safety under selected resilient architectures~\cite{niu2022restart, maruf2022timing, niu2022analytical, abdi2018guaranteed}. However, these studies largely consider single-system settings and are not readily applicable to large interconnected systems with nonlinear dynamics. Due to coupling, the effects of an attack on one subsystem can propagate to others, affecting their operation. Moreover, in an interconnected CPS, multiple subsystems may be attacked in a coordinated fashion, in arbitrary orders, with potentially infinitely many possibilities of temporal overlap \cite{ten2017impact}. This combinatorial nature poses a scalability challenge as the number of subsystems grows. To the best of our knowledge, no scalable method has yet been developed in the literature for assigning cyber-resilient architectures across multiple subsystems in an interconnected CPS with formal safety guarantees.

This paper bridges that gap by developing a scalable framework that handles multiple actuation attacks on subsystems in arbitrary order with arbitrary temporal overlap. Our solution introduces a scalar metric that relates a resilient architecture’s recovery capability to safety-critical assignment decisions. Similar compositional approaches have been used to decompose safety constraints across subsystems or to design centralized controllers for safety \cite{maruf2022compositional,nejati2020cbc,yang2020smallgain, niu2023compositional}. However, those results do not apply here, as our setting requires new indices and safety conditions tailored to resilient-architecture assignment. We make the following contributions:

\begin{itemize}
\item We introduce a \emph{Criticality Index} (CI) that quantifies the impact of a subsystem’s compromised input on system's safety. The proposed CIs aggregate linearly across subsystems, enabling scalable analysis under arbitrary attack orderings and temporal overlaps.


\item We derive a sufficient condition for safety, expressed as a linear inequality that relates each subsystem’s CI to its recovery time, thereby guiding assignment of resilient architectures and tuning of their recovery times.

\item We further refine the sufficient condition via a segmentation-based approach that permits employing architectures with longer recovery times.

\item We formulate a Sum-of-Squares (SOS) optimization problem to compute the criticality indices and present an algorithm to find a cost-optimal architecture assignment with a safety guarantee.

\item We validate our proposed framework through a detailed case study on temperature regulation in interconnected rooms for different attack scenarios.
\end{itemize}

This paper is organized as follows. Section \ref{sec:formulation} defines the system model and problem formulation. Section \ref{sec:results} presents our main results, and Section \ref{sec:simulation} showcases a case study. Section \ref{sec:conclusion} concludes the paper.

\section{System Model and Problem Formulation}\label{sec:formulation}
We consider an interconnected Cyber-Physical System (CPS) composed of a set of $N$ subsystems, denoted as $\mathcal{S}=\{S_1, S_2, \dots, S_N\}$.  Each subsystem $S_i$ evolves according to the following dynamics:
\begin{equation}
    S_i: \dot{x}_i = f_i(x) + g_i(x)~ u_i \label{eq:system}
\end{equation}
\noindent In \eqref{eq:system}, $x_i \in \mathbb{R}^{n_i}$ is the state of subsystem $S_i$, and $x = [x_1^T~ x_2 ^T \cdots x_N^T]^T \in \mathbb{R}^n$ is the overall state of the system, where $n = \sum_{i=1}^N n_i$. The control input applied to subsystem $S_i$ is $u_i \in \mathbb{R}^{r_i}$. The functions $f_i : \mathbb{R}^n \to \mathbb{R}^{n_i}$ and $g_i : \mathbb{R}^n \to \mathbb{R}^{n_i \times r_i}$ are assumed to be Lipschitz continuous. The nonlinear dynamics in (\ref{eq:system}) represent an interconnected system coupled via the state $x$. We assume that the control input to each subsystem $S_i$ is bounded, i.e., $u_i \in \mathcal{U}_i$, where $\mathcal{U}_i = \prod_{j=1}^{r_i} [\underline{u}_{i,j}, \overline{u}_{i,j}]$ with $\underline{u}_{i,j} < \overline{u}_{i,j}$.
Each subsystem $S_i$ employs a nominal control policy defined as a function $\hat{u}_i=\mu_i(x): \mathbb{R}^n \rightarrow \mathcal{U}_i$, which maps the full state of the system to an admissible control input.


We assume there is a safety constraint imposed on the overall system, encoded by a continuously differentiable function $h: \mathbb{R}^n \rightarrow \mathbb{R}$, such that the system state must remain within the set:
\begin{equation}
    \mathcal{C} = \{ x \in \mathbb{R}^n : h(x) \geq 0 \} \nonumber
\end{equation}
This set $\mathcal{C}$, referred to as the \textit{safety set}, characterizes the allowable and safe operating region of the system. 

Each subsystem $S_i$ is vulnerable to cyberattacks by malicious adversaries. Such events may compromise the control input $u_i$, replacing it with the corrupted input $\tilde{u}_i \in \mathcal{U}_i$, which can potentially violate the safety constraint $h(x) \geq 0$. For our analysis, we consider an attack cycle $[t_0,t_f]$, where each subsystem may be compromised once by the attackers, in an arbitrary order and with any possible temporal overlaps. We emphasize that infinitely many attack scenarios may arise due to the arbitrary timing of subsystem compromises. Fig.~\ref{fig:Attack} illustrates several such scenarios for a system with three subsystems. The time $t_f$ designates the termination of an attack cycle, i.e., the instant by which all compromised subsystems have recovered.

\begin{figure}[H]
    \centering
    \includegraphics[scale=0.37]{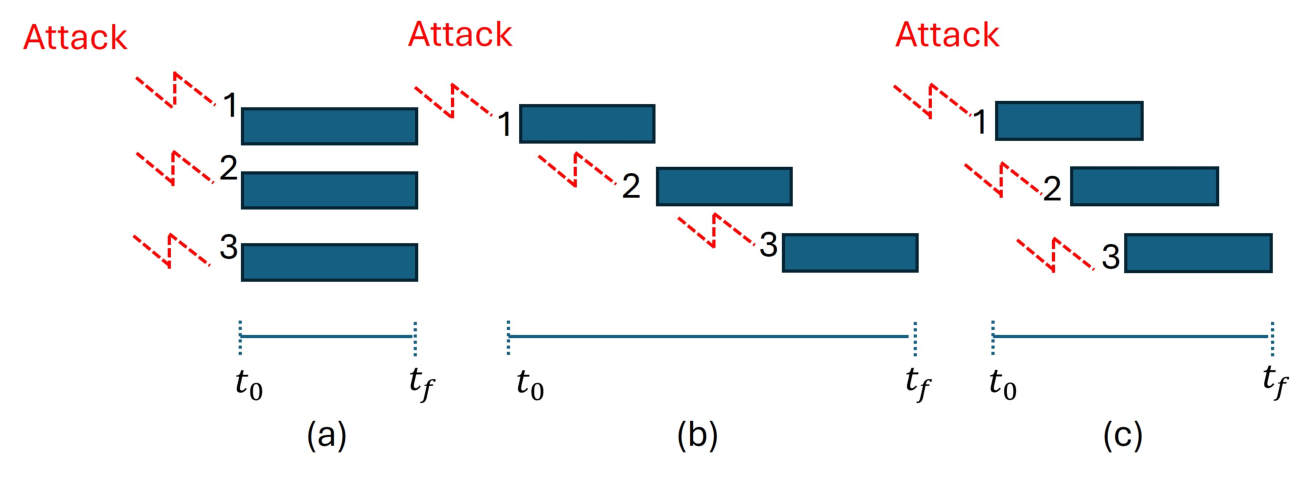}
    \caption[Three different attack scenarios.] {Illustration of three attack scenarios on an interconnected CPS with three subsystems: (a) all subsystems are compromised simultaneously, (b) each subsystem is attacked sequentially without overlap among the compromised subsystems, and (c) subsystems are attacked with 50\% temporal overlap among the compromised subsystems. The red dashed arrows denote attack events, while the blue bars denote the duration of subsystem compromise until recovery. Recovery times are shown equal for simplicity but are not assumed equal in our analysis. Attacks may occur in different order with infinitely many possible temporal overlaps among subsystems.} 
    \label{fig:Attack}
\end{figure}

To maintain system safety against cyberattacks, we assume that each subsystem $S_i$ employs a resilient architecture that reactively recovers the compromised subsystem from a cyberattack within a certain recovery time.\footnote{For example, the Byzantine Fault Tolerant++ (BFT++) architecture has been shown to guarantee the recovery of a compromised controller within certain epochs (or controller operation cycles) \cite{mertoguno2019physics}.} Specifically, we assume a set of \( J \) available resilient architectures \( \mathcal{A}= \{A_1, A_2, \dots, A_J\} \), each characterized by a guaranteed recovery time \( t(A_j) \). We define a mapping $M: \mathcal{S} \rightarrow \mathcal{A}$ where \( M(S_i) = A_j \) indicates  the architecture \( A_j \) is employed in subsystem \( S_i \). Our objective in this study is to determine how to assign the resilient architectures to individual subsystems such that the overall system remains safe under any arbitrary attack scenario and multiple attack cycles. Specifically, we aim to derive conditions for assigning the architecture $M(S_i)$ to each subsystem $S_i$ for $i=1, 2, \dots, N$ that ensure strict safety guarantees. We formally state the problem as follows:

\begin{problem} \label{prob1}
Select a mapping  $M: \mathcal{S} \rightarrow \mathcal{A}$ such that the interconnected system governed by (\ref{eq:system}) remains safe under any cyberattack scenario during an attack cycle. Additionally, determine the conditions under which the mapping $M$ guarantees system safety across multiple attack cycles.
\end{problem}


\section{Main Result} \label{sec:results}


In this section, we present our main results for the resilient architecture assignment with safety guarantees. We begin by introducing the criticality index (CI), which forms the foundation of our scalable solution approach. Next, we derive a sufficient condition for safety in the form of a linear inequality that links each subsystem’s CI with its recovery time. We then propose a segmentation-based strategy to tighten this condition. Finally, we formulate the computation of the CI as a sum-of-squares (SOS) optimization problem and provide an algorithm to find a cost-optimal architecture assignment with guaranteed safety.

\subsection{CI-Based Safety Guarantee} \label{sec:Res} 
We aim to develop a scalable solution to the resilient architecture assignment problem that applies to arbitrary attack scenarios in a large interconnected CPS, regardless of the order or temporal overlap of subsystem compromises. 
Before presenting our solution approach, we first make a few remarks about the nominal policy and the system’s operating range.  

To guarantee safety against cyberattacks, it is necessary that the system does not operate on the boundary of the safety set, i.e., $\{x:h(x)=0\}$. 
Accordingly, we consider that the system maintains a margin from the boundary of the safety set, characterized by a parameter $c>0$. 
Specifically, prior to the occurrence of an attack, the system operates within $\mathcal{C}_c=\{x:h(x)\geq c\}$, i.e., $x(t_0)\in \mathcal{C}_c$. Additionally, in the absence of cyberattacks, the nominal policy keeps the system within $\mathcal{C}_c$. Formally, we make the following assumptions for the nominal control policy $\hat{u}_i=\mu_i(x)$ applied to each subsystem $S_i$, for $i=1,2,\dots, N$.

\begin{assumption}
\begin{equation} 
\sum_{i=1}^N \frac{\partial h}{\partial x_i} \big(f_i(x)+g_i(x)\hat{u}_i(x)\big) \;\geq\; \alpha(h(x)-c), 
\quad \forall x\in \mathcal{C}_c 
\end{equation} \label{assump1}
\begin{equation} 
\sum_{i=1}^N \frac{\partial h}{\partial x_i} \big(f_i(x)+g_i(x)\hat{u}_i(x)\big) \;\geq\; \frac{c}{\tau}, 
\quad ~~~~\forall x\in \mathcal{C}\setminus\mathcal{C}_c 
\end{equation}
\end{assumption}
\hfill\(\Box\)

Here, $\alpha(\cdot)$ denotes an extended class-$\mathcal{K}$ function. 
The first condition ensures that the set $\mathcal{C}_c$ is forward invariant, i.e., $x(t)\in\mathcal{C}_c$, under the nominal policy in the absence of attacks. 
This follows directly from the properties of control barrier function (CBF)~\cite{ames2016cbf}. 
The second condition ensures that if the system state lies in $\mathcal{C}\setminus\mathcal{C}_c$, the nominal policy drives the state back into $\mathcal{C}_c$ within a finite time $\tau$.  
The first condition is practical, since a nominal policy should not itself cause a safety violation. 
The second condition is necessary to guarantee safety across multiple attack cycles, as discussed later in this section.\footnote{One could replace the second condition with a condition of finite-time convergence control barrier function (FCBF), which guarantees system's return to $\mathcal{C}_c$ within $t\in\left[0,\frac{c^{(1-p)}}{\gamma(1-p)}\right]$ for some $\gamma>0$ and $p\in[0,1)$~\cite{li2018formally}. Our results still hold under this replacement by taking $\tau=\frac{c^{1-p}}{\gamma(1-p)}$ throughout the subsequent analysis.}

We now present our solution approach, which relies on constructing a scalar metric for each subsystem $S_i$ that quantifies its potential impact on overall system's safety in the event of a compromise. 
We refer to this metric as the \textit{Criticality Index} (CI), formally defined below.


\begin{definition}\label{def:rho}
For each subsystem $S_i$, we define the criticality index (CI), denoted as $\hat{\rho}_i$, with respect to the set $\mathcal{C} \setminus \mathcal{C}_c$ as
\begin{equation}
\hat{\rho}_i = \inf_{x \in \mathcal{C} \setminus \mathcal{C}_c,\, u_i \in \mathcal{U}_i} \left\{ \frac{\partial h}{\partial x_i} \, g_i(x)(u_i - \hat{u}_i) \right\}
\end{equation}
\end{definition}

The index $\hat{\rho}_i$ is a negative scalar that quantifies the worst-case rate of degradation of the safety function $h(x)$ resulting from deviations in the control input from the nominal policy in subsystem $\mathcal{S}_i$. A more negative value of $\hat{\rho}_i$ indicates a more critical subsystem, in the sense that its compromised input leads to faster safety degradation. We emphasize that the criticality index is defined over the set $\mathcal{C} \setminus \mathcal{C}_c$, rather than the entire safety set $\mathcal{C}$, as this yields a less conservative result.

When \( \hat{\rho}_i \) is difficult to compute, we may estimate it by finding a scalar \( \rho_i \in \mathbb{R} \) such that for all \( x \in \mathcal{C} \setminus \mathcal{C}_c \) and \( u_i \in \mathcal{U}_i \)
\begin{equation} \label{eq:rho}
\frac{\partial h}{\partial x_i} g_i(x) \left( u_i(x) - \hat{u}_i(x) \right) \geq \rho_i 
\end{equation}
By Definition \ref{def:rho}, $\hat{\rho}_i\geq \rho_i$ for any $\rho_i$ satisfying \eqref{eq:rho}. 

We now derive a sufficient condition to guarantee the system's safety that links the recovery time of each subsystem's employed resilient architecture to its CI. Specifically, it describes what should be the recovery time of the architecture $M(S_i)$ implemented in each subsystem $S_i$ so that the system remains safe in the occurrence of cyberattacks in multiple subsystems. The result is formalized below.

\begin{theorem} \label{Th1} 
Consider an attack cycle $[t_0,\, t_f]$ where each subsystem is compromised at most once in the cycle. Assume $x(t_0) \in \mathcal{C}_c$ and the nominal policy satisfies Assumption~1. Suppose each subsystem $S_i$ employs a cyber-resilient architecture $M(S_i)$ according to the mapping $M$. Then the system is guaranteed to be safe for all $t \in [t_0,t_f]$ if  
\begin{equation} \label{eq:Safety Condition_1}
\sum_{i=1}^N \rho_i~t(M(S_i)) + c\geq0 
\end{equation}
\end{theorem}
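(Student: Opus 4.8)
The plan is to argue by contradiction. Suppose \eqref{eq:Safety Condition_1} holds yet safety nonetheless fails, and let $t^\star\in[t_0,t_f]$ be the first violation time, so that $h(x(t^\star))=0$ while $h(x(t))>0$ for all $t\in[t_0,t^\star)$. The backbone of the argument is to monitor $h$ along the closed-loop trajectory and to split its time derivative into a nominal-drift part and an attack-perturbation part, then bound each separately.

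First I would differentiate $h$ along the closed loop. Letting $\mathcal{I}(t)$ denote the set of subsystems compromised at time $t$ (an uncompromised subsystem applies $\hat{u}_i$, a compromised one applies some $\tilde{u}_i\in\mathcal{U}_i$), the chain rule gives
\[
\dot h=\underbrace{\sum_{i=1}^N\frac{\partial h}{\partial x_i}\big(f_i+g_i\hat{u}_i\big)}_{D(t)}+\underbrace{\sum_{i\in\mathcal{I}(t)}\frac{\partial h}{\partial x_i}g_i(\tilde{u}_i-\hat{u}_i)}_{P(t)}.
\]
Uncompromised subsystems contribute nothing to $P(t)$, since then $u_i-\hat{u}_i=0$, so only the compromised subsystems appear. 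The next step is to localize attention to the region where the estimate \eqref{eq:rho} is valid. Because $h(x(t_0))\ge c$ and $h(x(t^\star))=0$, continuity yields a last exit time $t_1<t^\star$ with $h(x(t_1))=c$ and $x(t)\in\mathcal{C}\setminus\mathcal{C}_c$ for every $t\in(t_1,t^\star)$. On this interval the second condition of Assumption~1 supplies the crucial bound $D(t)\ge c/\tau>0$ (note the first condition alone does not even guarantee $D\ge0$ there), so the nominal flow always pushes $h$ upward and only $P(t)$ can drive it toward zero.

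I would then integrate $\dot h$ over $(t_1,t^\star)$. Using $D(t)\ge c/\tau$ together with the per-subsystem bound \eqref{eq:rho}, which applies because $x\in\mathcal{C}\setminus\mathcal{C}_c$ and $\tilde{u}_i\in\mathcal{U}_i$,
\[
0-c=\int_{t_1}^{t^\star}D\dd{t}+\int_{t_1}^{t^\star}P\dd{t}\;\ge\;\frac{c}{\tau}(t^\star-t_1)+\sum_{i=1}^N\rho_i\,\ell_i,
\]
where $\ell_i$ is the time subsystem $S_i$ spends compromised within $(t_1,t^\star)$. The architecture $M(S_i)$ guarantees a compromise duration of at most $t(M(S_i))$, so $\ell_i\le t(M(S_i))$; since $\rho_i\le0$ this yields $\sum_i\rho_i\ell_i\ge\sum_i\rho_i t(M(S_i))$. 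Dropping the strictly positive term $\frac{c}{\tau}(t^\star-t_1)$ then gives $-c>\sum_i\rho_i t(M(S_i))$, i.e.\ $\sum_i\rho_i t(M(S_i))+c<0$, which contradicts \eqref{eq:Safety Condition_1}. Hence no violation can occur and the system is safe on $[t_0,t_f]$.

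I expect the main obstacle to be the bookkeeping of the localization step rather than any delicate estimate: one must argue carefully that it suffices to cap each subsystem's cumulative damage over $(t_1,t^\star)$ separately by $\rho_i t(M(S_i))$, and—most importantly—that the resulting bound $\sum_i\rho_i t(M(S_i))$ is independent of the order in which subsystems are attacked and of how their compromise windows overlap. This order- and overlap-independence is exactly the linear-aggregation property claimed for the CIs; it holds precisely because the perturbation integral separates into a sum of single-subsystem contributions, each capped by its own recovery time, so one never has to reason about the combinatorially many joint attack schedules.
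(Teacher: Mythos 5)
Your proposal is correct, and its skeleton matches the paper's: decompose $\dot h$ into the nominal drift $\sum_{i}\frac{\partial h}{\partial x_i}\big(f_i+g_i\hat u_i\big)$ plus an attack-perturbation term, discard the drift via Assumption~1, and cap each subsystem's perturbation integral by $\rho_i\,t(M(S_i))$ using \eqref{eq:rho} together with the recovery-time bound, so that the contributions aggregate linearly irrespective of attack order and overlap. Where you genuinely differ is the packaging: the paper integrates directly over $[t_0,t_f]$ and concludes $h(x(t))\geq \sum_{i}\rho_i\,t(M(S_i))+c$ outright, justifying the application of \eqref{eq:rho} only by the verbal remark that the system must pass through $\mathcal{C}\setminus\mathcal{C}_c$ to become unsafe; you instead argue by contradiction and localize to the last excursion $(t_1,t^\star)$ with $h(x(t_1))=c$, $h(x(t^\star))=0$, and $x(t)\in\mathcal{C}\setminus\mathcal{C}_c$ in between. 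That localization is a genuine gain in rigor, since \eqref{eq:rho} and the second condition of Assumption~1 are only stated on $\mathcal{C}\setminus\mathcal{C}_c$, so restricting the integration to the excursion interval is precisely what licenses the pointwise bounds; you are also right that nonnegativity of the drift there comes from the second condition ($\geq c/\tau$), since the first is stated only on $\mathcal{C}_c$ --- a point the paper's proof attributes to ``Assumption~1'' generically. Conversely, the paper's direct version buys a quantitative residual margin, $h(x(t))\geq \sum_{i}\rho_i\,t(M(S_i))+c$, rather than a bare contradiction, and is shorter. One small omission on your side: you silently use $\rho_i\leq 0$ to pass from $\ell_i\leq t(M(S_i))$ to $\rho_i\ell_i\geq\rho_i\,t(M(S_i))$; this is harmless but should be stated, and it follows immediately by taking $u_i=\hat u_i(x)\in\mathcal{U}_i$ in \eqref{eq:rho}.
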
 
\begin{proof} Assuming $x(t_0)\in \mathcal{C}_c$, we can write:

\begin{align}\label{prf1}
\resizebox{0.95\linewidth}{!}{$
\begin{aligned}
h(x(t)) &= \int_{t_0}^{t} \dot{h}(x(t))\, dt + h(x(t_0)) \\
&\geq \int_{t_0}^{t} \Bigg(\sum_{i=1}^N 
    \frac{\partial h}{\partial x_i}\big(f_i(x)+g_i(x)~u_i\big)\Bigg)\,dt + c \\
&= \int_{t_0}^{t} \Bigg(\sum_{i=1}^N 
    \frac{\partial h}{\partial x_i}\big(f_i(x)+g_i(x)(u_i-\hat{u}_i)+g_i(x)\hat{u}_i\big)\Bigg)\,dt + c \\
&= \int_{t_0}^{t} \Bigg(\sum_{i=1}^N 
    \frac{\partial h}{\partial x_i}\big(f_i(x)+g_i(x)~\hat{u}_i\big)\Bigg)\,dt \\
&\quad + \sum_{i=1}^N \Bigg(\int_{t_0}^{t} 
    \frac{\partial h}{\partial x_i} g_i(x)(u_i-\hat{u}_i)\,dt\Bigg) + c
\end{aligned}
$}
\end{align}

First, note that the system remains safe if it stays within the set $\mathcal{C}_c$ and it must pass through the set $\mathcal{C}\setminus\mathcal{C}_c$ in order to become unsafe. Therefore, to derive a sufficient condition for safety, we consider the worst-case scenario in which the system remains in the set $\{x \in \mathbb{R}^n : h(x) \leq c \}$ during the entire compromise period. Now, by Assumption~1, we have $\sum_{i=1}^N \frac{\partial h}{\partial x_i} (f_i(x)+g_i(x)~\hat{u}_i)\geq 0$ for $x\in \mathcal{C} \setminus \mathcal{C}_c$. Furthermore, for each subsystem $S_i$, we have $u_i\neq \hat{u}_i$ only for $t(M(S_i))$ time-duration, as the maximum time an attacker can compromise a subsystem in a single attack cycle is bounded by its recovery time. Utilizing this fact together with \eqref{eq:rho}, where $\rho_i$ is a negative scalar, we can write for each subsystem in a single attack cycle that 
\begin{equation} \label{eq:deg_amnt}
 \int_{t_0}^{t} \frac{\partial h}{\partial x_i}g_i(x)({u}_i-\hat{u}_i)~dt \geq  \int_{0}^{t(M(S_i))} \rho_i ~dt \geq \rho_i~t(M(S_i))   
\end{equation} 
Substituting these into \eqref{prf1}, for $t\in[t_0,t_f]$ we get
\begin{align}  
h(x(t))\geq \sum_{i=1}^N \rho_i~t(M(S_i)) +c
\end{align} 
Therefore, if \eqref{eq:Safety Condition_1} is satisfied, then $h(x(t))\geq0$, i.e. $x(t)\in \mathcal{C}$, which implies the system is safe for all $t \in [t_0,t_f]$.

\end{proof}

Theorem 1 establishes a simple linear inequality that can be easily evaluated to guarantee that the assigned architectures keep the system safe throughout any single attack cycle. Moreover, the theorem suggests a simple strategy for architecture assignment—subsystems with more negative CIs should be assigned architectures with shorter recovery times. In addition to verifying safety, the condition in Theorem 1 also provides guidance for designing the recovery time of each individual subsystem. Let $T_i$ be the tunable recovery time of subsystem $S_i$. Then, based on the proof of Theorem 1, the system will maintain safety if $\sum_{i=1}^N \rho_i~T_i + c>0$. It is easy to check that a necessary condition for safety under Theorem \ref{Th1} is: $T_i \leq -\frac{c}{\rho_i}$ for all $i=1, \cdots, N$. It is also worth noting that a larger value of $c$ permits the use of architectures with longer recovery times. This is intuitive, as a higher $c$ implies the system operates farther from the safety boundary, allowing more time to recover for each subsystem without violating safety constraints.

We now present the safety conditions for the case of multiple attack cycles in Theorem~\ref{Th2}.

\begin{theorem} \label{Th2} 
Consider that the system experiences multiple attack cycles: $[t_{0,1},~ t_{f,1}], [t_{0,2},~ t_{f,2}], \cdots$, where $t_{0,1}=t_0$ and each subsystem is compromised at most once per attack cycle. Assume $x(t_{0}) \in \mathcal{C}_c$ and the nominal policy satisfies Assumption \ref{assump1}. Suppose each subsystem $S_i$ employs a cyber-resilient architecture $M(S_i)$ according to the mapping $M$,  which satisfies the condition given by \eqref{eq:Safety Condition_1}. Then, the system is guaranteed to remain safe for all $t\geq t_0$, provided that the time gap between any two consecutive attack cycles is at least $\tau$; that is, $t_{0,k+1}-t_{f,k}\geq \tau$ for all $k=1, 2, \cdots$.
\end{theorem}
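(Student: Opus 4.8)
The plan is to argue by induction on the attack-cycle index $k$, maintaining the invariant that the system lies in the margin set $\mathcal{C}_c$ at the start of every cycle. The base case is immediate: since $x(t_{0,1})=x(t_0)\in\mathcal{C}_c$ and the mapping $M$ satisfies \eqref{eq:Safety Condition_1}, Theorem~\ref{Th1} guarantees $h(x(t))\geq 0$ for all $t\in[t_{0,1},t_{f,1}]$, and in particular $x(t_{f,1})\in\mathcal{C}$.

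The central ingredient is an analysis of the recovery gap $[t_{f,k},\,t_{0,k+1}]$. By the definition of $t_{f,k}$, all compromised subsystems have recovered by $t_{f,k}$, so the nominal policy $\hat u_i$ is applied to every subsystem throughout the gap. Starting from $x(t_{f,k})\in\mathcal{C}$ with $h(x(t_{f,k}))\geq 0$, I would distinguish two cases. If $x(t_{f,k})\in\mathcal{C}_c$ already, the first condition of Assumption~\ref{assump1} makes $\mathcal{C}_c$ forward invariant (a standard comparison-lemma argument applied to $\dot h\geq\alpha(h-c)$ on $\{h\geq c\}$), so the state stays in $\mathcal{C}_c$ for the remainder of the gap. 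If instead $x(t_{f,k})\in\mathcal{C}\setminus\mathcal{C}_c$, the second condition of Assumption~\ref{assump1} gives $\dot h(x)\geq c/\tau$ while the state remains in $\mathcal{C}\setminus\mathcal{C}_c$; since $h(x(t_{f,k}))\geq 0$, the deficit to reach the level $c$ is at most $c$, so the constant lower bound $c/\tau$ on $\dot h$ forces the trajectory to reach $\{h=c\}$, and hence enter $\mathcal{C}_c$, within time at most $\tau$, after which forward invariance keeps it there. Because the gap length obeys $t_{0,k+1}-t_{f,k}\geq\tau$, I conclude $x(t_{0,k+1})\in\mathcal{C}_c$. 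Moreover, on the whole gap $h$ is non-decreasing (one has $\dot h\geq c/\tau>0$ in $\mathcal{C}\setminus\mathcal{C}_c$ and $\dot h\geq\alpha(h-c)\geq 0$ with $h\geq c$ in $\mathcal{C}_c$), so starting from $h\geq 0$ the safety constraint is preserved throughout the gap.

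With this estimate in hand the inductive step closes cleanly: assuming $x(t_{0,k})\in\mathcal{C}_c$, Theorem~\ref{Th1} certifies $h(x(t))\geq 0$ on the cycle $[t_{0,k},t_{f,k}]$ and yields $x(t_{f,k})\in\mathcal{C}$, and the gap argument then delivers $x(t_{0,k+1})\in\mathcal{C}_c$, restoring the hypothesis for cycle $k+1$. Concatenating the per-cycle guarantee of Theorem~\ref{Th1} with the per-gap guarantee shows $h(x(t))\geq 0$ on every attack cycle and every intervening recovery gap, i.e., for all $t\geq t_0$.

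I expect the main obstacle to be the recovery-gap estimate, namely quantifying precisely that a gap of length $\tau$ suffices to restore the full margin $c$. The delicate point is that the rate bound $\dot h\geq c/\tau$ holds only on $\mathcal{C}\setminus\mathcal{C}_c$, so I must argue jointly from both conditions of Assumption~\ref{assump1} that the trajectory neither lingers in this band longer than $\tau$ nor exits $\mathcal{C}_c$ once it has entered. Particular care is needed in the worst case $h(x(t_{f,k}))=0$, where the entire margin $c$ must be recovered within exactly $\tau$; this is the configuration in which the constant $c/\tau$ chosen in Assumption~\ref{assump1} is seen to be exactly tight, and it is what motivates requiring the inter-cycle gap to be at least $\tau$.
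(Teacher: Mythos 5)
Your proposal is correct and follows essentially the same route as the paper's proof: apply Theorem~\ref{Th1} on each cycle, use the second condition of Assumption~\ref{assump1} to bound $\dot h \geq c/\tau$ on $\mathcal{C}\setminus\mathcal{C}_c$ so that a gap of length $\tau$ restores $x \in \mathcal{C}_c$, invoke forward invariance of $\mathcal{C}_c$ from the first condition, and iterate. Your write-up is merely a more explicit rendering of the paper's argument---casting the iteration as formal induction, separating the two cases for $x(t_{f,k})$, and noting that $h$ is non-decreasing throughout each recovery gap, all of which the paper leaves implicit.
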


\begin{proof}
After the first attack cycle, all controllers are fully recovered and resume their nominal policies, while maintaining safety as guaranteed by Theorem~1. Hence, we have $h(x(t_{f,1}))\geq 0$ and $u_i=\hat{u}_i$ for $t \in [t_{f,1}, t_{0,2}]$. If the subsequent attack cycle begins after a delay of $\tau$, then according to Assumption~\ref{assump1}, the system will have returned to the set $\mathcal{C}_c$. This is because for $x\in \mathcal{C} \setminus \mathcal{C}_c$
\begin{align}
h(x(t)) &= \int_{t_{f,1}}^{t} \dot{h}(x(t))\, dt + h(x(t_{f,1})) \nonumber \\
&\geq \int_{t_{f,1}}^{t} \sum_{i=1}^N \frac{\partial h}{\partial x_i} (f_i(x)+g_i(x)~\hat{u}_i(x))\, dt \nonumber \\
&\geq \frac{c}{\tau} (t-t_{f,1}) \label{mult_att}
\end{align}

Therefore, if $t\geq\tau$, we have $h(x(t))\geq c$ according to \eqref{mult_att}, recalling that the set $\mathcal{C}_c$ is forward invariant under the nominal policy by the Assumption~\ref{assump1}. Therefore, if $t_{0,2}-t_{f,1}\geq \tau$ then at the onset of the next attack cycle, we have $x(t_{0,2}) \in \mathcal{C}_c$ which ensures the system's safety during this attack cycle by Theorem~1. This reasoning can be applied iteratively, thereby guaranteeing the system's safety for all future attack cycles.
\end{proof}

Theorem~\ref{Th2} underscores the importance of Assumption~\ref{assump1}. To ensure safety against multiple attack cycles, the system must return to the initial set $\mathcal{C}_c$ at the beginning of each new attack cycle. The second condition guarantees that the nominal control policy can drive the system from $\mathcal{C} \setminus \mathcal{C}_c$ to $\mathcal{C}_c$ within the finite time $\tau$. A smaller value of $\tau$ enables the system to withstand more frequent attacks, as dictated by Theorem~\ref{Th2}. However, the need to meet mission objectives and satisfy $u_i \in \mathcal{U}_i$ can restrict the attainable value of $\tau$.

\subsection{Improvement of Bound via Segmentation}
In this section, we present an approach to improve the results given in Theorems~\ref{Th1} and \ref{Th2} by tightening the safety condition in \eqref{eq:Safety Condition_1}. The key idea behind our approach is that using a single worst-case rate over the set $\mathcal{C} \setminus \mathcal{C} _c$ can be conservative, especially when $\dot{h}(x)$ varies substantially and the safety margin $c$ is large. To reduce this conservatism, we adopt a segmentation-based strategy in which the set $\mathcal{C} \setminus \mathcal{C}_c$ is partitioned into $K$ equal segments, where $K$ is a positive integer (see the Fig. ~\ref{fig:segmentation}). We then define a separate CI for each segment. As a result, each subsystem $S_i$ is associated with $K$ criticality indices, denoted as $\hat{\rho}_{ij}$ where $j=1, 2, \cdots, K$. The indices for each segment are defined as follows.
\begin{figure}[htbp]
    \centering    \includegraphics[width=0.65\linewidth]{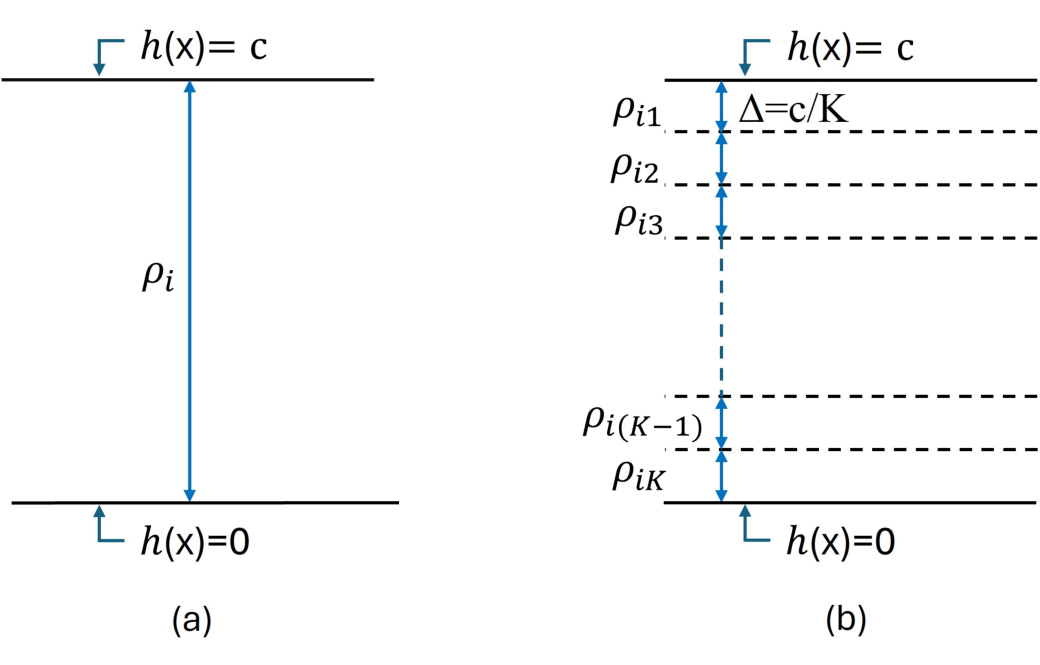} 
    \caption[Safety Degradation Segmentation]{(a) Single criticality index $\rho_i$ representing worst-case degradation rate of the safety function over the set $\mathcal{C} \setminus \mathcal{C}_c$ for the subsystem $S_i$.
    (b) Segmented criticality indices $\rho_{ij}$, where the set $\mathcal{C} \setminus \mathcal{C}_c$ is divided into $K$ equal intervals of size $\Delta = c/K$.}
    \label{fig:segmentation}
\end{figure}
\begin{definition}\label{def:rho_seg}
For each subsystem $S_i$, we define the segmented criticality index (SCI), denoted as $\hat{\rho}_{ij}$, with respect to the safety set $\mathcal{C}_c$ as
\begin{equation} \label{new_rho}
\hat{\rho}_{ij} = \inf_{x \in \mathcal{C}_j,\, u_i \in \mathcal{U}_i} \left\{ \frac{\partial h}{\partial x_i} \, g_i(x)(u_i - \hat{u}_i) \right\}
\end{equation}
where $\mathcal{C}_j=\{x: (j-1)\Delta\leq h(x) \leq j\Delta\}$, $\Delta=\frac{c}{K}$ and $j=1, 2,\cdots, K$.
\end{definition}
Note that since $\mathcal{C}_j \subset \mathcal{C}_c$, therefore $\hat{\rho}_{ij}\geq \hat{\rho}_i$. Similar to before, we can use $\rho_{ij}$ to estimate $\hat{\rho}_{ij}$ by computing a scalar $\rho_{ij}$ such that
\begin{equation} \label{eq:rho_ij}
\frac{\partial h}{\partial x_i} g_i(x) \left( u_i(x) - \hat{u}_i(x) \right) \geq \rho_{ij},~~~\forall x\in \mathcal{C}_j,~\forall u_i\in \mathcal{U}_i
\end{equation}

The usage of SCI allows us to refine the condition \eqref{eq:Safety Condition_1}, resulting in a tighter criterion. In particular, this refinement permits subsystems to employ architectures with longer recovery times while still maintaining the overall safety of the system. To present the updated result, we first arrange the SCI in ascending order. Let $\{\rho'_{i1},\rho'_{i2}, \cdots, \rho'_{iK} \}$ represent a permutation of $\{\rho_{i1},\rho_{i2}, \cdots, \rho_{iK} \}$, ordered such that $\rho'_{i1}\leq \rho'_{i2} \leq \cdots \leq \rho'_{iK} \leq 0$. Then \eqref{eq:Safety Condition_1} can be tightened to
\begin{align}\label{eq: new safety constraint}
\sum_{i=1}^{N} D_i ( M(S_i)) + c  \geq 0
\end{align}
where $D_i(\cdot)$ is given by
\begin{equation} \label{segment}
\resizebox{\linewidth}{!}{$
D_i(T) =
\begin{cases}
\rho'_{i1} T, & 0 \leq T \leq -\Delta \cdot \frac{1}{\rho'_{i1}} \\[15pt]
-\Delta + \rho'_{i2} \left( T + \Delta \cdot \frac{1}{\rho'_{i1}} \right), & -\Delta \cdot \frac{1}{\rho'_{i1}} \leq T \leq -\Delta \cdot \left( \frac{1}{\rho'_{i1}} + \frac{1}{\rho'_{i2}} \right) \\[15pt]
-2\Delta + \rho'_{i3} \left( T + \Delta \left( \frac{1}{\rho'_{i1}} + \frac{1}{\rho'_{i2}} \right) \right), & -\Delta \left( \frac{1}{\rho'_{i1}} + \frac{1}{\rho'_{i2}} \right) \leq T \leq -\Delta \left( \frac{1}{\rho'_{i1}} + \frac{1}{\rho'_{i2}} + \frac{1}{\rho'_{i3}} \right) \\[10pt]
\quad \vdots & \quad \vdots \\[10pt]
-(K - 1)\Delta + \rho'_{iK} \left( T + \Delta \sum_{j=1}^{K-1} \frac{1}{\rho'_{ij}} \right), & T \in \left( -\Delta  \sum_{j=1}^{K-1} \frac{1}{\rho'_{ij}},~ \; -\Delta \sum_{j=1}^{K} \frac{1}{\rho'_{ij}} \right) \\
\end{cases}
$} 
\end{equation}
In the expression above, $D_i(T)$ measures the degradation of the $h(x)$ when the subsystem $S_i$ remains compromised for a duration $T$.  Note that \eqref{eq:Safety Condition_1} is recovered by taking $K=1$. With this updated inequality, we now revise the statements of Theorems 1 and 2 accordingly.

\begin{theorem} \label{Th3} 
Consider that the system experiences an attack cycle $[t_0,~ t_f]$ where each subsystem is compromised at most once in the cycle. Assume $x(t_0) \in \mathcal {C}_c$ and the nominal policy satisfies Assumption~1. Suppose each subsystem $S_i$ employs a cyber-resilient architecture $M(S_i)$ according to the mapping $M$. Then the system is guaranteed to be safe for all $t \in [t_0,t_f]$ if the condition given in \eqref{eq: new safety constraint} holds. 
\end{theorem}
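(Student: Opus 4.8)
The plan is to mirror the proof of Theorem~\ref{Th1} and replace the single-rate bound on each subsystem's attack contribution with the tighter, segment-aware bound encoded by $D_i$. Starting from $x(t_0)\in\mathcal C_c$, I would write $h(x(t))=h(x(t_0))+\int_{t_0}^{t}\dot h\,ds$ and split $\dot h$ into the nominal-drift term $\sum_i\frac{\partial h}{\partial x_i}(f_i+g_i\hat u_i)$ and the attack term $\sum_i\phi_i$, where $\phi_i:=\frac{\partial h}{\partial x_i}g_i(x)(u_i-\hat u_i)$. By Assumption~\ref{assump1} the nominal-drift term is nonnegative throughout $\mathcal C$ (it is $\geq\alpha(h-c)\geq0$ on $\mathcal C_c$ and $\geq c/\tau>0$ on $\mathcal C\setminus\mathcal C_c$), so a first attempt discards it, reducing the theorem to the per-subsystem claim $\int\phi_i\,dt\geq D_i(t(M(S_i)))$ accumulated over the duration $T_i:=t(M(S_i))$ for which $S_i$ is compromised.

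The key is to recognize $D_i$ as the value function of a small linear program. Using the pointwise bound $\phi_i(t)\geq\rho_{i,j(t)}$ from \eqref{eq:rho_ij}, where $j(t)$ indexes the segment $\mathcal C_{j(t)}\ni x(t)$, I would write $\int_0^{T_i}\phi_i\,dt\geq\sum_{j=1}^K\rho_{ij}\tau_{ij}$, with $\tau_{ij}$ the time $S_i$ stays compromised while $x\in\mathcal C_j$ and $\sum_j\tau_{ij}=T_i$. The crucial structural fact is the per-segment time cap $\tau_{ij}\leq-\Delta/\rho_{ij}$: since each segment has $h$-width $\Delta$ and the worst-case rate of decrease of $h$ contributed by $S_i$ in $\mathcal C_j$ is $|\rho_{ij}|$, a downward traversal of $\mathcal C_j$ consumes at least $-\Delta/\rho_{ij}$ of compromised time. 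The minimum of $\sum_j\rho_{ij}\tau_{ij}$ over $\sum_j\tau_{ij}=T_i$, $0\leq\tau_{ij}\leq-\Delta/\rho_{ij}$ is attained greedily by saturating the most negative rates first, and equals precisely the piecewise-linear $D_i(T_i)$ in \eqref{segment}. Hence, once the caps are shown to hold, the actual allocation is feasible for this program, so $\sum_j\rho_{ij}\tau_{ij}\geq D_i(T_i)$ and therefore $\int\phi_i\,dt\geq D_i(T_i)$. Summing over $i$ yields $h(x(t))\geq c+\sum_i D_i(t(M(S_i)))$, and \eqref{eq: new safety constraint} gives $h(x(t))\geq0$.

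I expect the main obstacle to be justifying the per-segment cap $\tau_{ij}\leq-\Delta/\rho_{ij}$ for general, non-monotone trajectories and for simultaneously compromised subsystems. The cap is immediate when $h$ decreases monotonically and crosses each segment once, but nominal drift or the action of other subsystems can push $h$ back up and cause $\mathcal C_j$ to be re-entered, so that the naive pointwise integral over-counts degradation that is in fact canceled by the positive drift discarded in the first step. To close this gap I would not discard the drift on the revisited portions; instead I would introduce a scalar comparison trajectory $\underline{h}$ driven by the worst-case segmented rates, invoke a comparison lemma to show $h(x(t))\geq\underline{h}(t)$, and then argue that the monotone full-degradation profile minimizes $\underline{h}$ and attains the bound $\sum_i D_i(T_i)+c$. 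Establishing this comparison rigorously---especially handling the discontinuous, segment-dependent right-hand side and the arbitrary onset times of the $N$ attacks---is the technical heart of the proof; the reduction to a per-subsystem linear program and the greedy identification of $D_i$ are then routine.
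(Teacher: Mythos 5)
Your route is the same as the paper's: mirror the proof of Theorem~\ref{Th1}, keep the initial margin $c$, discard the nonnegative nominal-drift term, and replace the single-rate bound \eqref{eq:deg_amnt} by the per-subsystem, segment-aware bound $\int\phi_i\,dt\ge D_i(t(M(S_i)))$ with $\phi_i:=\frac{\partial h}{\partial x_i}g_i(x)(u_i-\hat u_i)$ --- this is exactly the paper's \eqref{eq:deg_amnt_new} --- and your identification of $D_i$ in \eqref{segment} as the greedy value of a linear program with time budget $T_i$ and per-segment caps $\tau_{ij}\le-\Delta/\rho_{ij}$ is precisely the paper's ``distance traveled'' analogy (speed $-\rho_{ij}$, interval length $\Delta$, worst case equals traversal in ascending SCI order), stated in sharper form.

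The obstacle you flag, however, is genuine, and you should know that the paper's own proof does not resolve it: the traversal analogy silently assumes a monotone, single-crossing descent. Indeed \eqref{eq:deg_amnt_new} is \emph{false} as a statement about the attack term alone: if the nominal drift of the uncompromised subsystems balances the attack so that $x$ hovers inside the segment with the most negative index $\rho'_{i1}$ for the whole compromise window, then $\int\phi_i\,dt=\rho'_{i1}T_i<D_i(T_i)$ as soon as $T_i>-\Delta/\rho'_{i1}$; the cap on $\tau_{ij}$ fails exactly as you suspected. The theorem survives because the over-counted degradation is paid for by the positive drift that the first step discarded, and a correct proof must keep that bookkeeping. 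Your comparison-trajectory plan can likely be made to work, but it is heavier than necessary (discontinuous, segment-dependent right-hand side; $N$ asynchronous compromise windows). A lighter closure, in the spirit of your own LP reduction, is a \emph{net, joint} accounting per segment: argue by contradiction at a first violation time $t^\ast$, let $t_1$ be the last time $h=c$ before $t^\ast$, and note that on $[t_1,t^\ast]$ re-entries into $\mathcal{C}_j$ cancel, so the net change of $h$ accumulated while $x\in\mathcal{C}_j$ is $-\Delta$; since $\dot h=d+\sum_i\phi_i$ with drift $d\ge0$ by Assumption~\ref{assump1}, this yields $\sum_i\delta_{ij}\ge\Delta$ for every $j$, where $\delta_{ij}\ge0$ is the degradation contributed by $S_i$ in $\mathcal{C}_j$, which by \eqref{eq:rho_ij} costs compromised time at least $\delta_{ij}/|\rho_{ij}|$. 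Trimming each column so that $\sum_i\delta_{ij}=\Delta$ makes the per-segment caps $\delta_{ij}\le\Delta$ hold \emph{automatically}, your greedy LP then gives $\sum_j\delta_{ij}\le-D_i(T_i)$ for each $i$, and summing yields $c=K\Delta\le\sum_i\bigl(-D_i(T_i)\bigr)\le c$, a contradiction apart from the harmless equality case where $h$ merely touches $0$ and the closed set $\mathcal{C}$ is not left. Note that this establishes only the aggregate bound --- the per-subsystem inequality \eqref{eq:deg_amnt_new} is neither true nor needed --- and it also repairs the minor point that $\rho_{ij}$ is defined only on $\mathcal{C}\setminus\mathcal{C}_c$, which your normalization $\sum_j\tau_{ij}=T_i$ glossed over.
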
 

\begin{proof}
The proof follows the same structure as that of Theorem~1, with modifications to account for segmentation in \eqref{eq:deg_amnt}. 
First, note that for each subsystem $S_i$ and each segment $\mathcal{C}_j$, 
the worst-case degradation of the safety function $h(x)$ is bounded below by 
$\max(-\Delta,\, \rho_{ij}\,t_{ij})$, 
where $t_{i,j}$ denotes the time spent in segment $\mathcal{C}_j$ and satisfies 
$t_{i,j} \geq -\frac{\Delta}{\rho_{ij}}$.

The degradation of $h(x)$ over the segments is analogous to computing the total distance traveled along a path that is divided into equal-length intervals: $-\rho_{ij}$ plays the role of \textit{speed} in each interval, while $\Delta$ represents the \textit{interval length}. Thus, $-\rho_{ij}t_{ij}$ corresponds to the distance covered in each segment, which is upper bounded by $\Delta$, and summing across all segments yields the total distance. Moreover, since the ordered sequence of SCIs satisfies 
$\rho'_{i1}\leq \rho'_{i2}\leq \cdots \leq \rho'_{iK}\leq 0$, the worst-case scenario corresponds to traversing the segments in this order. Following this analogy, we can thereby express the inequality in \eqref{eq:deg_amnt} as
\begin{equation} \label{eq:deg_amnt_new}
 \int_{t_0}^{t} \frac{\partial h}{\partial x_i}\, g_i(x)\,(u_i-\hat{u}_i)\, dt \;\;\geq\; D_i(M(S_i))
\end{equation}
where $D_i(\cdot)$ is given by \eqref{segment}. Substituting this bound into \eqref{prf1} completes the proof.
\end{proof}

Similarly, Theorem~\ref{Th2} can be updated as Theorem~\ref{Th4}. 
The proof follows the same logical structure as that of Theorem~\ref{Th2} and is therefore omitted to avoid repetition.  

\begin{theorem} \label{Th4} 
Consider that the system experiences multiple attack cycles: $[t_{0,1},~ t_{f,1}], [t_{0,2},~ t_{f,2}], \cdots$, where $t_{0,1}=t_0$ and each subsystem is compromised at most once per attack cycle. Assume $x(t_{0}) \in \mathcal{C}_c$ and the nominal policy satisfies Assumption~1. Suppose each subsystem $S_i$ employs a cyber-resilient architecture $M(S_i)$ according to the mapping $M$, which satisfies the condition given by \eqref{eq: new safety constraint}. Then, the system is guaranteed to remain safe for all $t\geq t_0$, provided that the time gap between any two consecutive attack cycles is at least $\tau$; that is, $t_{0,k+1}-t_{f,k}\geq \tau$ for all $k=1, 2, \cdots$.
\end{theorem}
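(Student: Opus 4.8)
The plan is to mirror the inductive argument used in the proof of Theorem~\ref{Th2}, substituting the within-cycle guarantee of Theorem~\ref{Th3} for that of Theorem~\ref{Th1}. The key observation is that the segmentation refinement affects only the \emph{intra-cycle} degradation bound---through $D_i(\cdot)$ and the tightened condition \eqref{eq: new safety constraint}---whereas the \emph{inter-cycle} recovery mechanism depends solely on the second inequality of Assumption~\ref{assump1}, which governs the nominal policy in the absence of attacks and is therefore independent of the choice of $K$. Consequently, replacing Theorem~\ref{Th1} by Theorem~\ref{Th3} should leave the between-cycle analysis of Theorem~\ref{Th2} entirely intact.

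First I would establish the base case. Since $x(t_0)=x(t_{0,1})\in\mathcal{C}_c$ and the mapping $M$ satisfies \eqref{eq: new safety constraint}, Theorem~\ref{Th3} guarantees $h(x(t))\geq 0$ for all $t\in[t_{0,1},t_{f,1}]$; in particular $h(x(t_{f,1}))\geq 0$, and by the definition of the attack cycle every compromised controller has recovered by $t_{f,1}$, so $u_i=\hat{u}_i$ for all $t\in[t_{f,1},t_{0,2}]$.

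Next I would reproduce the recovery estimate of Theorem~\ref{Th2} verbatim: for any $x\in\mathcal{C}\setminus\mathcal{C}_c$ on $[t_{f,1},t_{0,2}]$, integrating $\dot{h}$ from $t_{f,1}$ and invoking the second condition of Assumption~\ref{assump1} yields $h(x(t))\geq \frac{c}{\tau}(t-t_{f,1})$. Hence after a gap of at least $\tau$ we recover $h(x(t_{0,2}))\geq c$, i.e.\ $x(t_{0,2})\in\mathcal{C}_c$, using also the forward invariance of $\mathcal{C}_c$ under the nominal policy. This step references no criticality index and is thus unchanged from Theorem~\ref{Th2}. Closing the induction, the hypotheses of Theorem~\ref{Th3}---namely $x(t_{0,k})\in\mathcal{C}_c$ together with satisfaction of \eqref{eq: new safety constraint}---are re-established at the start of cycle $k+1$, so the argument repeats for every $k$, giving safety for all $t\geq t_0$.

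I do not anticipate a substantive obstacle, since the heavy lifting was already done in Theorems~\ref{Th2} and \ref{Th3}. The only point requiring care---the ``hard part,'' such as it is---is confirming that the segmented bound $D_i(\cdot)$ plays no role whatsoever in the $\tau$-gap recovery computation, so that the single-cycle guarantee can be swapped out without any change to the inter-cycle estimate; verifying this cleanly is precisely what justifies omitting the full proof in the paper.
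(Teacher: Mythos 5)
Your proposal is correct and matches exactly what the paper intends: the authors omit the proof of Theorem~\ref{Th4}, stating it follows the same logical structure as Theorem~\ref{Th2}, which is precisely your argument of swapping Theorem~\ref{Th1} for Theorem~\ref{Th3} in the base case while keeping the $\tau$-gap recovery estimate from Assumption~\ref{assump1} unchanged. Your added observation that the segmented bound $D_i(\cdot)$ plays no role in the inter-cycle computation is the right justification for why the substitution is seamless.
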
 

Relative to Theorems~\ref{Th1} and~\ref{Th2}, Theorems~\ref{Th3} and~\ref{Th4} allow each subsystem to employ architectures with longer recovery times, thereby yielding less conservative safety constraints. The benefit of segmentation becomes evident when comparing the necessary recovery times for each subsystem. 
Let the recovery time of subsystem $S_i$ be denoted by $T_i$. 
Then, the necessary condition for $T_i$ to guarantee safety under Theorems~\ref{Th3} and~\ref{Th4} is $T_i <  -\Delta \sum_{j=1}^{K} \frac{1}{\rho'_{ij}}$.
This follows directly from condition~\eqref{eq: new safety constraint} and the definition of $D_i(\cdot)$ in~\eqref{segment}. In contrast, under Theorems~\ref{Th1} and~\ref{Th2}, the necessary condition for $T_i$ to guarantee safety is $T_i < -\frac{c}{\rho_i}$. Using the facts that $0 \geq \rho'_{ij} \geq \rho_i$ and that the harmonic mean of a set of positive numbers does not exceed their arithmetic mean (HM $\leq$ AM) \cite{protter1977analysis}, we obtain: $-\Delta \sum_{j=1}^{K} \frac{1}{\rho'_{ij}}
= -\frac{c}{K} \sum_{j=1}^{K} \frac{1}{\rho'_{ij}}
\;\;\geq\;\; -\frac{c}{\rho_i}$. Moreover, this inequality is strict whenever $\dot{h}(x)$ is not constant in the set $\mathcal{C} \setminus \mathcal{C}_c$. Therefore, the segmentation-based approach permits each subsystem to tolerate longer recovery times compared to the non-segmented case. 

Note that computational complexity increases with the number of segments. However, these computations can be performed offline as discussed later. Furthermore, beyond a certain segmentation level, the marginal improvement in the bound may become negligible.

\subsection{Computation of SCIs and Algorithms for Architecture Assignment} \label{sec:algo}

In this section, we first present a computational method for determining the segmented criticality indices $\rho_{ij}$. The method leverages Sum-of-Squares (SOS) optimization techniques~\cite{papachristodoulou2013sostools}. To apply this technique, we proceed under specific assumptions regarding the system dynamics, the nominal control policy, and the defined safety constraints throughout the remainder of this section.

\begin{assumption}\label{assump:semi-algebraic}
We assume that $g_i(x)$ and the nominal policy $\hat{u}_i(x)$ are polynomials in $x$ for all $i=1,2, \cdots, N$ and $h(x)$ is polynomial in $x$.
\end{assumption}

Based on the above assumption, we now aim to compute $\rho_{ij}$ using SOS optimization. For that, we first show how to translate conditions \eqref{new_rho} into SOS constraints, as given by the following result. In this approach, we maximize $\rho_{ij}$ to make \eqref{eq:rho_ij} reasonably tight.

\begin{lemma} \label{lemma:1} \label{eq:rho SOS}
Suppose Assumption \ref{assump:semi-algebraic} holds and $p_{i,j}(x,u_i)$, $q_{i,j}(x,u_i)$, $w_{i,j}(x,u_i)$ and $v_{i,j}(x,u_i)$ are SOS polynomials, where $i= 1,\ldots,N$ and $j= 1, \ldots ,K$.  If  $\rho_{ij}$ is the solution to the SOS program:
\begin{align} \label{eq:rho SOS 1}
  \max \; & \rho_{ij} \nonumber\\
  \text{s.t. }~&
    \frac{\partial h}{\partial x_i} g_i(x) (u_i - \hat{u}_i(x)) - \rho_{ij} \nonumber\\
  &+ p_{i,j}(x,u_i)\,(h(x)-j\Delta) \nonumber\\
  &- q_{i,j}(x,u_i)\,(h(x) - (j-1)\Delta) \nonumber\\
  &- \sum_{k=1}^{r_i} w_{i,j,k}(x,u_i)(u_{i,k}-\underline{u}_{i,k}) \nonumber\\
  &- \sum_{k=1}^{r_i} v_{i,j,k}(x,u_i)(\overline{u}_{i,k}-u_{i,k})
    \;\; \text{is SOS}
\end{align}
then $\rho_{ij}$ satisfies \eqref{eq:rho_ij}. Furthermore, $\rho_{ij}=\hat{\rho}_{ij}$ when the
expression in \eqref{eq:rho SOS 1} is quadratic.
\end{lemma}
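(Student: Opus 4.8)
The plan is to read the constraint in \eqref{eq:rho SOS 1} as a Positivstellensatz (S-procedure) certificate that the polynomial $F(x,u_i):=\frac{\partial h}{\partial x_i}g_i(x)(u_i-\hat{u}_i(x))-\rho_{ij}$ is nonnegative over the semialgebraic set $\mathcal{C}_j\times\mathcal{U}_i$, and then to argue that, in the quadratic regime, this certificate is lossless.

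For the first assertion I would denote by $\Sigma(x,u_i)$ the SOS polynomial appearing in \eqref{eq:rho SOS 1} and solve for $F$, obtaining
\begin{align}
F &=\Sigma+p_{i,j}(j\Delta-h)+q_{i,j}\big(h-(j-1)\Delta\big) \nonumber\\
&\quad+\sum_{k=1}^{r_i}w_{i,j,k}(u_{i,k}-\underline{u}_{i,k})+\sum_{k=1}^{r_i}v_{i,j,k}(\overline{u}_{i,k}-u_{i,k}). \nonumber
\end{align}
The set $\mathcal{C}_j\times\mathcal{U}_i$ is precisely the locus where $h-(j-1)\Delta\geq0$, $j\Delta-h\geq0$, and $u_{i,k}-\underline{u}_{i,k}\geq0$, $\overline{u}_{i,k}-u_{i,k}\geq0$ for every $k$. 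Fixing any point of this set, each constraint factor above is nonnegative, each multiplier $p_{i,j},q_{i,j},w_{i,j,k},v_{i,j,k}$ is nonnegative because it is SOS, and $\Sigma\geq0$; hence every summand is nonnegative and $F\geq0$, which is exactly \eqref{eq:rho_ij}. The only thing to verify is that the signs are oriented correctly (for instance the term $p_{i,j}(j\Delta-h)$ in the expression for $F$ is nonnegative since $h\leq j\Delta$ on $\mathcal{C}_j$), which is routine bookkeeping.

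An immediate consequence is that every feasible $\rho_{ij}$ of the program is a lower bound for the infimum in \eqref{new_rho}, so the optimum satisfies $\rho_{ij}\leq\hat{\rho}_{ij}$; the SOS program is therefore a relaxation, and the lemma's second part amounts to the reverse inequality $\hat{\rho}_{ij}\leq\rho_{ij}$. Here I would invoke the classical fact that a globally nonnegative quadratic polynomial is necessarily a sum of squares, i.e.\ in degree two the SOS cone coincides with the nonnegativity cone. When the certified expression in \eqref{eq:rho SOS 1} is quadratic, this equivalence upgrades the \emph{sufficient} SOS condition to a \emph{necessary} one, so that the choice $\rho=\hat{\rho}_{ij}$ yields an expression that is nonnegative and hence SOS; thus $\hat{\rho}_{ij}$ is feasible for the program and $\rho_{ij}=\hat{\rho}_{ij}$.

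The main obstacle is precisely this last step. Degree-two nonnegativity-equals-SOS disposes of the SOS part, but it does not by itself guarantee that ``nonnegative on $\mathcal{C}_j\times\mathcal{U}_i$'' can be re-expressed with SOS multipliers \emph{times} the defining constraints as in \eqref{eq:rho SOS 1}; that representability is the losslessness of the S-procedure, which for several simultaneous quadratic constraints is delicate and generally requires a constraint qualification. I would therefore make explicit the mild hypothesis that $\mathcal{C}_j\times\mathcal{U}_i$ has nonempty interior (a Slater point), under which the quadratic S-procedure is exact, and cite the corresponding result to close the gap.
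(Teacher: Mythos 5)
Your treatment of the first claim is correct and is in substance the paper's own proof: you read the constraint as a weighted-sum (S-procedure) certificate, solve for $F(x,u_i)$, and observe that on $\mathcal{C}_j\times\mathcal{U}_i$ every multiplier--constraint product is nonnegative, so $F\geq 0$ there, which is \eqref{eq:rho_ij}. If anything, your sign bookkeeping is cleaner than the paper's write-up, which states some of the intermediate products with flipped signs before arriving at the same conclusion. Your observation that any feasible $\rho_{ij}$ lower-bounds the infimum in \eqref{new_rho}, so that the second claim amounts to the reverse inequality, also matches the logical structure of the lemma.

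The gap is in your final step, and it is a genuine one. You correctly notice that ``nonnegative quadratic $=$ SOS'' (the only fact the paper invokes, citing Powers--Reznick) does not by itself yield $\hat{\rho}_{ij}\leq\rho_{ij}$: one must still produce multipliers realizing the certificate at $\rho=\hat{\rho}_{ij}$, i.e., losslessness of the S-procedure. But your proposed repair---that a Slater point makes the quadratic S-procedure exact---is false for multiple constraints. Yakubovich's S-lemma gives exactness for exactly \emph{one} quadratic constraint; here, when the certified expression is quadratic the SOS multipliers degenerate to nonnegative constants, and the feasible set involves at least two quadratic constraints ($(j-1)\Delta\leq h(x)\leq j\Delta$) plus $2r_i$ box constraints on $u_i$, so the program is Shor's relaxation of a multi-constraint QCQP. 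No Slater-type condition rescues exactness there: minimizing an indefinite quadratic over a box (which has nonempty interior) is NP-hard, so a general ``Slater $\Rightarrow$ exact'' theorem for this class cannot exist, and explicit two-constraint counterexamples with interior points are standard. Hence your last step does not close the gap as stated; closing it would require either additional structure (e.g., a single active quadratic constraint, or hypotheses ensuring strong duality for the specific QCQP) or a different argument altogether. It is worth noting that the paper's own one-line justification of $\rho_{ij}=\hat{\rho}_{ij}$ omits exactly this representability issue, so your diagnosis of where the difficulty lies is sharper than the paper's proof, even though the cited cure does not work.
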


\begin{proof}
Let \(\rho_{ij}\) be the solution to the SOS optimization problem defined in~\eqref{eq:rho SOS 1}. Since \(p_{i,j}(x,u_i)\), \(q_{i,j}(x,u_i)\), \(w_{i,j}(x,u_i)\), and \(v_{i,j}(x,u_i)\) are SOS polynomials, they are nonnegative for all \(x\) and \(u_i\). Within the domain \(\mathcal{C}_j = \{x : (j-1)\Delta \leq h(x) \leq j\Delta\}\), the inequalities \(h(x)-j\Delta \leq 0\) and \(h(x)-(j-1)\Delta \geq 0\) hold by Definition~\ref{def:rho_seg}. Consequently, \(p_{ij}(x,u_i)(h(x)-j\Delta)\geq0\) and \(-q_{i,j}(x,u_i)(h(x)-(j-1)\Delta)\geq0\). Similarly, for all \(u_i\in \mathcal{U}_i\), the input constraint inequalities imply $-\sum_{k=1}^{r_i}(w_{i,j,k}(x,u_i)(u_{i,k}-\underline{u}_{i,k})\geq 0$  and $\sum_{k=1}^{r_i} v_{i,j,k}(x,u_i)(\overline{u}_{i,k}-u_{i,k})\geq 0$. Thus, any $\rho_{ij}$ rendering constraint \eqref{eq:rho SOS 1} as an SOS satisfies $~\frac{\partial h}{\partial x_i} g_i(x) (u_i - \hat{u}_i(x)) - \rho_{ij}\geq 0$ or equivalently \eqref{eq:rho_ij} for $x \in \mathcal{C}_j$ and $u_i\in \mathcal{U}_i$. Since for quadratic polynomials the SOS condition is equivalent to nonnegativity ~\cite{powers2011positive}, 
we have $\rho_{ij} = \hat{\rho}_{ij}$ whenever the expression in \eqref{eq:rho SOS 1} is quadratic.
\end{proof}
Lemma~\ref{lemma:1} can be implemented offline to compute the segmented criticality indices $\{\rho_{ij}\}$. Once $\{\rho_{ij}\}$ have been computed for all $i=1,\cdots,N$ and $j=1, \cdots, K$. they can be used to assess whether a given mapping $M$ for architecture assignment ensures safety. Furthermore, if the cost or price associated with implementing each architecture is known, we can determine an optimal assignment that minimizes the total cost while guaranteeing the system's safety. Algorithm~\ref{algo:C1} presents an algorithm to compute the optimal architecture assignment $M_{opt}$, where $p(A_j)$ and $t(A_j)$ denote the implementation cost and recovery time of architecture $A_j$, respectively.

 \begin{center}
  	\begin{algorithm}[!htp]
  		\caption{Optimal Assignment of Resilient Architectures}
  		\label{algo:C1}
  		\begin{algorithmic}[1]
  		\State \textbf{Input:} Dynamics $f_i(x)$, $g_i(x)$, control bounds $\{\underline{u}_{i}\},\{\overline{u}_{i}\}$ for $i = 1, 2, \cdots, N$, safety function $h(x)$, parameter $c$, number of segments $K$, architecture set $\{A_1, A_2, \dots, A_J\}$.
        \State \textbf{Output:} Optimal mapping $M_{opt}$.
        \State Compute $\{\rho_{ij}\}$ using Lemma 1 for all $i=1, 2,  \cdots, N$ and $j=1, 2, \cdots, K$.
        \State Sort $\{\rho_{ij}\}$ in ascending order: $\{\rho'_{ij}\}\gets$ sort($\{\rho_{ij}\}$)
        \State  \textbf{Initialization:} $M_{opt}=\emptyset,~d=\infty$.
        \For{Each mapping $M: \mathcal{S} \rightarrow \mathcal{A}$}
            \State Check if \eqref{eq: new safety constraint} is satisfied.
            \If{Yes}
            \If{$d>\sum_{i=1}^N p(M(S_i))$}
            \State $d \gets \sum_{i=1}^N p(M(S_i))$, $M_{opt}\gets M$.
            \EndIf
            \EndIf
         \EndFor
         \If{$M_{opt}=\emptyset$}
            \State No solution.
        \Else
             \State \textbf{return} $M_{opt}$
        \EndIf
  		\end{algorithmic}
  	\end{algorithm}
  \end{center}

\section{Case Study} \label{sec:simulation}

\begin{table}[b]
\centering
\caption{Segmented criticality indices, recovery times, and degradation measures computed for the subsystems.}
\begin{tabularx}{\linewidth}{|c|X|X|X|}
\hline
\textbf{SCI} & \textbf{$S_1$} & \textbf{$S_2$} & \textbf{$S_3$} \\
\hline
$\rho_{i1}$ & -172.2 & -172.2 & -12.89 \\
$\rho_{i2}$ & -212.57 & -212.57 & -11.92 \\
$\rho_{i3}$ & -238.42 & -238.44 & -16.2 \\
$\rho_{i4}$ & -259.52 & -264.97 & -18.54 \\
$\rho_{i5}$ & -297.2 & -298.27 & -20.26 \\
$\rho_{i6}$ & -354.3 & -289.8 & -23.16 \\
$\rho_{i7}$ & -374.51 & -374.51 & -22.89 \\
$\rho_{i8}$ & -408.70 & -394.69 & -24.4 \\
\hline
\textbf{Recovery ($T$)} & 0.009192 & 0.009192 & 0.100917 \\
\hline
$D_i(T)$ & -1.80 & -1.80 & -1.25 \\
\hline
\multicolumn{4}{|c|}{$c + \sum_{i=1}^N D_i(T) \geq 0 \quad \Rightarrow \quad 5 - (1.80+ 1.80 + 1.25) = 0.15 \geq 0$} \\
\hline
\end{tabularx}
\label{tab:CI_recovery}
\end{table}

 In this section, we illustrate the effectiveness of our proposed approach using a case study focused on temperature regulation in a circular building comprising \( N \) interconnected rooms~\cite{girard2015safety}. Our simulation setup is similar to the one used in~\cite{maruf2022compositional}. Here, each room $i = 1,\ldots,N$ has an associated temperature state \( x_i \) that evolves according to the following continuous-time dynamic model:
\begin{equation}
    \resizebox{.9\columnwidth}{!}{$
    \dot{x}_i = \frac{1}{\delta}\left(w(x_{i+1} + x_{i-1} - 2x_i) + y(T_e - x_i) + z(T_h - x_i)u_i\right)$}\nonumber
\end{equation} 
where \( T_e \) represents the external ambient temperature, \( T_h \) denotes the heater temperature, and \( u_i \) is the control input for room \( i \). The variables \( x_{i+1} \) and \( x_{i-1} \) denote the temperatures in adjacent rooms, and a circular configuration is implemented through boundary conditions \( x_0 = x_N \) and \( x_{N+1} = x_1 \).

For the demonstrations, we set $N=3$, treating each room as a subsystem. The external and heater temperatures are fixed at \( T_e = -1^\circ\mathrm{C} \) and \( T_h = 50^\circ\mathrm{C} \), respectively. Control inputs are constrained within the intervals \( \mathcal{U}_1, \mathcal{U}_2 \in [-2, 2] \) and \( \mathcal{U}_3 \in [0, 0.6] \). System parameters are selected as \( w = 0.45 \), \( y = 0.045 \), \( z = 0.09 \), and the scaling factor is set to \( \delta = 0.1 \). To ensure safety, we require that the state vector \( x \in \mathbb{R}^3 \) remains within a safety set \( \mathcal{C}  = \{ x : h(x) \geq 0 \} \subset \mathbb{R}^3 \) for all \( t \geq 0 \), where the function \( h(x) \) is defined as
\begin{equation}
h(x) = \left( \frac{1}{3} \sum_{i=1}^{3} x_i - 15 \right) \left( 20 - \frac{1}{3} \sum_{i=1}^{3} x_i \right)\nonumber
\end{equation}

\begin{figure}[htbp]
    \centering
    \includegraphics[width=0.95\linewidth]{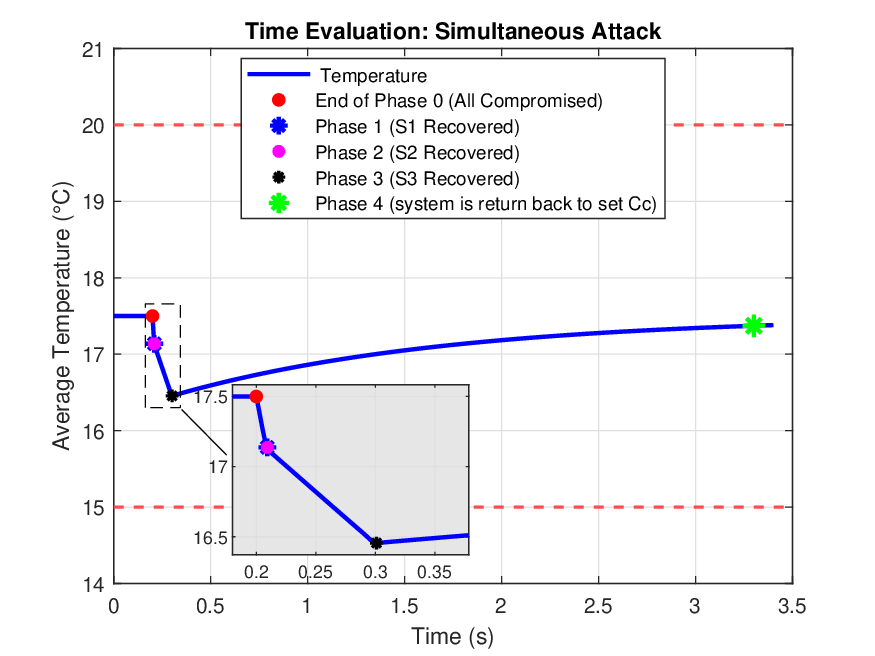}
    \caption[Time evolution: Simultaneous Attack.]{Time evolution of average temperature for simultaneous attack. All the systems are compromised simultaneously after Phase 0, which corresponds to time $t=0.2$ seconds. At Phase 1 (or equivalently, Phase 2), $S_1$ and $S_2$ are recovered together, as both have the same recovery time; however, $S_3$ remains compromised. At Phase 3, $S_3$ is also recovered. At Phase 4, the system returns to the set $\mathcal{C}_c$. As we see, the system maintains safety within the attack cycle.}
    \label{fig:full_attack_recovery}
\end{figure}

\begin{figure}[htbp]
    \centering
    \includegraphics[width=0.95\linewidth]{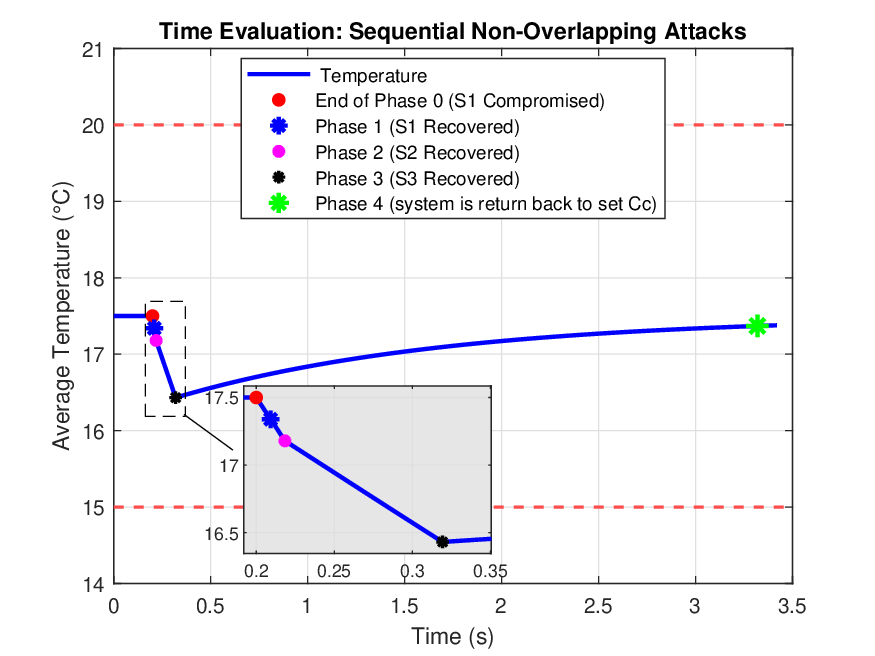}
    \caption[Time evaluation: Sequential Non-Overlapping Attacks.]{Time evolution of average temperature under sequential non-overlapping attacks. Subsystem $S_1$ is compromised after Phase~0 at  $t = 0.2$. During Phase 1, $S_1$ is recovered, and $S_2$ is compromised immediately. At Phase 2, $S_2$ is recovered, followed by the immediate compromise of $S_3$. At Phase 3, $S_3$ is recovered. Finally, at Phase 4, the system returns to the set $\mathcal{C}_c$. Throughout the attack cycle, the system maintains safety.}
    \label{fig:seq_attack_response}
\end{figure}

\begin{figure}[htbp]
    \centering
    \includegraphics[width=0.95\linewidth]{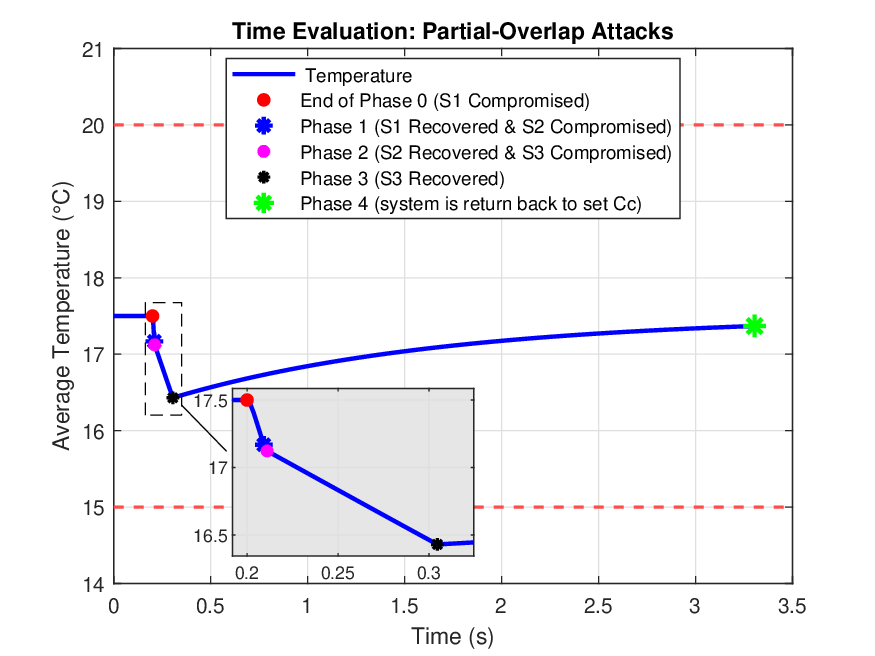}
    \caption[Time evolution: Partial Overlap Attack.]{
        Time evolution of average temperature under partial overlapping attack with an overlap interval of $\ t_{\mathrm{ov}} = 0.0018$~s. Subsystem $S_1$ is compromised after Phase~0 at  $t = 0.2$. During Phase 1, $S_1$ is recovered, and $S_2$ remains compromised. At Phase 2, $S_2$ is recovered and $S_3$ remains compromised. At Phase 3, $S_3$ is recovered. At Phase 4, the system returns to the set $\mathcal{C}_c$. Throughout the attack cycle, the system maintains safety. }
    \label{fig:overlap_attack}
\end{figure}

\begin{figure}[htbp]
    \centering
    \includegraphics[width=0.95\linewidth]{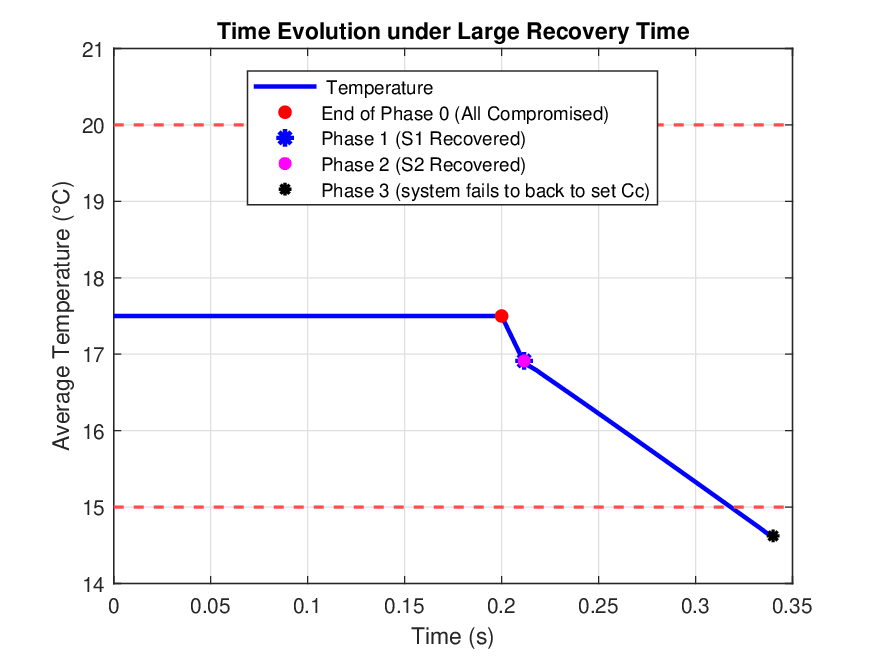}
    \caption [Time evolution: Simultaneous attacks with large recovery time ]{Time evolution of average temperature under simultaneous attacks with large recovery times. All subsystems are compromised simultaneously after Phase~0 at $t = 0.2$. The resilient architectures implemented in subsystems $S_1$, $S_2$, and $S_3$ exhibit recovery times of 0.0116 s, 0.0116 s, and 0.140 s, respectively. Due to its comparatively larger recovery time, $S_3$ fails to recover, causing the average temperature to fall below the threshold. Consequently, the system violates the safety constraint.}
    \label{fig:Large Recovery Time}
\end{figure}

This constraint ensures that the average temperature \( \frac{1}{3} \sum_{i=1}^{3} x_i \) remains within the range \([15, 20]\) at all times. In our simulation, we set the safety threshold parameter to \( c = 5 \). We adopt a nominal policy that satisfies Assumption~\ref{assump1} with the timing parameter \( \tau = 3 \) and is chosen as a linear state-feedback law.

Following our proposed approach, we first compute the CI values. We select $K=8$ and obtain \( \rho_{i1}, \ldots, \rho_{i8} \) for every subsystem \( S_1, S_2, S_3 \), as summarized in Table~\ref{tab:CI_recovery}. These indices reflect the potential degradation rates across the segments $\mathcal{C}_1, \cdots, \mathcal{C}_8$, as defined in Definition \ref{def:rho_seg}. 
Next, we select the recovery times for the subsystems to be $0.009192$ s, $0.009192$ s, and $0.100917$ s for \( S_1, S_2, S_3 \), respectively. This means that the implemented resilient architectures in subsystems \( S_1, S_2, S_3 \) have $0.009192$ s, $0.009192$ s, and $0.100917$ s of recovery time, respectively. We then determine the worst-case degradation \( D_i(\cdot) \) for each subsystem based on its selected architecture's recovery time. The final row shows that the total degradation across all subsystems does not exceed the safety margin. Therefore, the assigned architectures satisfy the conditions of Theorem~\ref{Th3}, ensuring that the overall system remains safe under any cyberattack scenario within an attack cycle.

Now we consider three distinct cyberattack scenarios in an attack cycle: (i) all subsystems are compromised simultaneously (similar to the scenario shown in Fig. 1 (a)), (ii) each subsystem is attacked sequentially without overlap among the compromised
subsystems (similar to Fig. 1 (b)), and (iii) subsystems are attacked with partial temporal overlap among the compromised subsystems (similar to Fig. 1 (c) with an overlap of $0.0018$ s). 
During a compromise, the attacker selects an input within the bound $u_i \in \mathcal{U}_i$ to drive the average temperature below the safety threshold. The time evolutions of average temperature under these three scenarios are demonstrated in Figs ~\ref{fig:full_attack_recovery}, \ref{fig:seq_attack_response}, and \ref{fig:overlap_attack}, respectively.  In all scenarios, the system remains within its safe region, as guaranteed by Theorem 3. Furthermore, we observe that once all nominal controllers are reinstated, the system returns to the set $\mathcal{C}_c=\{x:h(x)\geq c\}$ within $\tau=3$ seconds. Therefore, if a new attack cycle begins while the system is within $\mathcal{C}_c$ (as indicated by the green point highlighted in the figures), the system can withstand this subsequent attack cycle as well. Note that the minimum time gap required between consecutive attack cycles is dictated by $\tau$, which can be reduced by designing a new nominal policy.

For comparison, Fig.~\ref{fig:Large Recovery Time} shows the case of simultaneous attacks with large recovery times. In this scenario, the condition in Theorem~\ref{Th3} is not met, and the system leaves the safety set, highlighting the importance of selecting recovery times that satisfy \eqref{eq: new safety constraint}.

\section{Conclusion}\label{sec:conclusion}
\normalfont

In this paper, we studied the problem of assigning resilient architectures and designing their recovery times of CPS in an interconnected CPS with a safety guarantee against cyberattacks on actuators. In our approach, we constructed a scalar metric, referred to as criticality index (CI) for each subsystem and derived a sufficient condition for safety in the form of a linear inequality that links each subsystem’s CI to its recovery time. We then proposed a segmentation-based strategy to tighten the derived condition. Finally, we presented the computation of the indices as a sum-of-squares (SOS) optimization program and an algorithm to find a cost-optimal architecture assignment with the safety guarantee. The proposed solution was validated on a temperature-regulation case study in a three-room circular building for multiple attack scenarios. Our solution admits offline application to inform the design of interconnected CPS and the selection of recovery times for the employed cyber-resilient architectures. Future work will expand the framework to a broader range of cyberattacks, including attacks targeting sensors and communication links.

\section*{Acknowledgment}
We thank the Network Security Lab (NSL) at the University of Washington. In particular, we thank Prof. Radha Poovendran and Prof. Luyao Niu (University of Washington), Prof. Andrew Clark (Washington University in St. Louis), Prof. Bhaskar Ramasubramanian (Western Washington University), and Prof. Sukarno Mertoguno (Georgia Institute of Technology) for their valuable suggestions on this work.

\bibliographystyle{IEEEtran}
\bibliography{MyBib}

\end{document}